\def\algbackskip{\hskip-\ALG@thistlm}
\newtheorem{claim}{Claim}
\newtheorem{fact}{Fact}
\newtheorem{lemma}{Lemma}
\newtheorem{theorem}{Theorem}
\newtheorem{definition}{Definition}
\newtheorem{corollary}{Corollary}
\DeclareMathOperator*{\E}{\mathrm{E}}
\newcommand{\eps}{\varepsilon}
\newcommand{\bit}{\{0,1\}}
\newcommand{\bitsn}{\bit^n}
\newcommand{\event}{\mathcal{E}}
\newcommand{\deq}{:=}
\DeclareMathOperator*{\argmax}{argmax}
\DeclareMathOperator{\leaf}{\mathrm{leaf}}
\DeclareMathOperator{\adv}{adv}
\newcommand{\ygetix}{y^{(i \gets x)}\xspace}
\newcommand{\ygetsix}{y^{(i^* \gets x)}\xspace}
\newcommand{\zgetsix}{z^{(i^* \gets x)}\xspace}
\DeclareMathOperator{\xor}{\mathrm{XOR}}
\DeclareMathOperator{\xorg}{\xor\circ\xspace g}
\newcommand{\cA}{\mathcal{A}}
\newcommand{\PostBPP}{\mathrm{PostBPP}}
\newcommand{\noisyR}{\mathrm{noisyR}}
\newcommand{\fbs}{\mathrm{fbs}}
\newcommand{\ddt}{\mathrm{D}}
\newcommand{\rdt}{\mathrm{R}}
\newcommand{\rdta}{\overline{\mathrm{R}}}
\newcommand{\rdtde}{\rdt_{\delta,\eps}}
\title{A Strong XOR Lemma for Randomized Query Complexity}
\date{}
\author[1]{Joshua Brody}
\author[1]{Jae Tak Kim}
\author[1]{Peem Lerdputtipongporn}
\author[1]{Hariharan Srinivasulu}
\affil[1]{Swarthmore College}
\begin{document}
\maketitle

\begin{abstract}
  We give a strong direct sum theorem for computing $\xorg$.
  Specifically, we show that for every function $g$ and every
  $k\geq 2$, the randomized query complexity of computing the $\xor$
  of $k$ instances of $g$ satisfies
  $\rdta_\eps(\xorg) = \Theta(k \rdta_{\frac{\eps}{k}}(g))$.  This
  matches the naive success amplification upper bound and answers a conjecture
  of Blais and Brody~\cite{BlaisB19}.

  As a consequence of our strong direct sum theorem, we give a total
  function $g$ for which
  $\rdt(\xorg) = \Theta(k \log(k)\cdot \rdt(g))$, answering an open
  question from Ben-David et al.~\cite{BenDavidGKW20}.
 \end{abstract}

\section{Introduction}

We show that $\xor$ admits a strong direct sum theorem for randomized
query complexity.  Generally, the \emph{direct sum problem} asks how
the cost of computing a function $g$ scales with the number $k$ of
instances of the function that we need to compute.  This is a
foundational computational problem that has received considerable
attention~\cite{ImpagliazzoRW94,BenAsherN95,NisanRS99, Shaltiel03,
  JainKS10, BenDavidK18,Drucker12, BlaisB19, BenDavidB20a, BenDavidB20b,
  BenDavidGKW20}, including recent work of Blais and
Brody~\cite{BlaisB19}, which showed that \emph{average-case}
randomized query complexity obeys a direct sum theorem in a strong
sense --- computing $k$ copies of a function $g$ with overall error
$\eps$ requires $k$ times the cost of computing $g$ on one input with
very low ($\frac{\eps}{k}$) error.  This matches the naive success
amplification algorithm which runs an $\frac{\eps}{k}$-error algorithm for
$f$ once on each of $k$ inputs and applies a union bound to get an
overall error guarantee of $\eps$.

What happens if we don't need to compute $g$ on all instances, but
only on a \emph{function} $f\circ g$ of those instances?  Clearly the
same success amplification trick (compute $g$ on each input with low
error, then apply $f$ to the answers) works for computing $f\circ g$;
however, in principle, computing $f\circ g$ can be easier than
computing each instance of $g$ individually.  When a function
$f\circ g$ requires success amplification for all $g$, we say that $f$
\emph{admits a strong direct sum theorem}.  Our main result shows that
$\xor$ admits a strong direct sum theorem.

\subsubsection*{Query Complexity}A \emph{query algorithm} also known as
\emph{decision tree} computing $f$ is an algorithm $\cA$ that takes an
input $x$ to $f$, examines (or \emph{queries}) bits of $x$, and
outputs an answer for $f(x)$.  A \emph{leaf} of $\cA$ is a bit string
$q \in \bit^*$ representing the answers to the queries made by $\cA$
on input $x$.  Naturally, our general goal is to minimize the length
of $q$ i.e., minimize the number of queries needed to compute $f$.

A randomized algorithm $\cA$ \emph{computes} a function
$f : \bitsn \to \bit$ \emph{with error $\epsilon \ge 0$} if for every
input $x \in \bitsn$, the algorithm outputs the value $f(x)$ with
probability at least $1-\epsilon$.  The \emph{query cost} of $\cA$ is
the maximum number of bits of $x$ that it queries, with the
maximum taken over both the choice of input $x$ and the internal
randomness of $\cA$. The \emph{$\epsilon$-error (worst-case)
  randomized query complexity} of $f$ (also known as the
\emph{randomized decision tree complexity} of $f$) is the minimum
query complexity of an algorithm $\cA$ that computes $f$ with error at
most $\epsilon$. We denote this complexity by $\rdt_\epsilon(f)$, and
we write $\rdt(f) := \rdt_{\frac13}(f)$ to denote the $\frac13$-error
randomized query complexity of $f$.

Another natural measure for the query cost of a randomized algorithm
$\cA$ is the \emph{expected} number of coordinates of an input $x$
that it queries. Taking the maximum expected number of coordinates
queried by $\cA$ over all inputs yields the \emph{average query cost}
of $\cA$. The minimum average query complexity of an algorithm $\cA$
that computes a function $f$ with error at most $\epsilon$ is the
\emph{average $\epsilon$-error query complexity} of $f$, which we
denote by $\rdta_{\epsilon}(f)$. We again write
$\rdta(f) := \rdta_{\frac 13}(f)$. Note that
$\rdta_0(f)$ corresponds to the standard notion of
\emph{zero-error randomized query complexity} of $f$.

\subsection{Our Results}

Our main result is a strong direct sum theorem for $\xor$.

\begin{theorem}\label{thm:sds-xor}
  For every function $g:\bitsn \rightarrow \bit$ and all $\eps>0$, we
  have $\rdta_\eps(\xorg) = \Omega(k\cdot \rdta_{\eps/k}(g)).$
\end{theorem}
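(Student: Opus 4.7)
The plan is to prove the lower bound via Yao's minimax principle, extracting from any decision tree for $\xorg$ a decision tree for $g$ whose error is improved by a factor of $k$.

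First I would invoke Yao's minimax principle to pass to the distributional setting: fix a distribution $\mu$ on $\bitsn$ that witnesses $\rdta_{\eps/k}(g)$ in the sense that every deterministic algorithm computing $g$ on $\mu$ with distributional error at most $\eps/k$ has expected cost at least $\rdta_{\eps/k}(g)$. Let $\mathcal{T}$ be an optimal deterministic decision tree computing $\xorg$ on $\mu^k$ with average cost $T$ and distributional error at most $\eps$; the goal becomes $T \ge k \cdot \rdta_{\eps/k}(g)$. Decompose $\mathcal{T}$'s queries by coordinate: let $T_i$ be the expected number of queries $\mathcal{T}$ makes to bits of the $i$-th coordinate under $\mu^k$, so $\sum_i T_i = T$. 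Because the queries at each node touch only a single coordinate, the posterior distribution on $y\sim\mu^k$ conditional on reaching any leaf $\ell$ factorizes as a product over the $k$ coordinates, and with bias $b_i(\ell) \deq \E[(-1)^{g(y_i)} \mid \ell]$ the optimal XOR-prediction error at $\ell$ equals $\frac{1-|\prod_i b_i(\ell)|}{2}$. The hypothesis therefore becomes $\E_\ell[1-|\prod_i b_i(\ell)|] \le 2\eps$.

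The natural candidate reduction, given input $x$ for $g$, is to pick a coordinate $i$ (via some distribution), sample auxiliary inputs $z_{-i}\sim\mu^{k-1}$, run $\mathcal{T}$ on the composite input with $x$ at position $i$, and output $\mathcal{T}$'s answer XOR-ed with $\bigoplus_{j\neq i} g(z_j)$. If $i$ is uniform the expected query cost on $\mu$ is exactly $T/k$, so if we can bound the distributional error of this algorithm by $\eps/k$ then the hardness of $\mu$ gives $T/k \ge \rdta_{\eps/k}(g)$ and we are done.

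The main obstacle is establishing this $\eps/k$ error bound. The trivial inequality $|b_i(\ell)| \ge |\prod_j b_j(\ell)|$ only yields per-coordinate error at most $\eps$, losing the needed factor of $k$. The extra factor should come from the multiplicative structure of the biases: on ``good'' leaves where every $|b_i(\ell)|$ is close to $1$ we have $\sum_i (1-|b_i(\ell)|) \approx 1 - \prod_i |b_i(\ell)|$, so the deficits sum on average to $O(\eps)$ and per-coordinate error averages to $O(\eps/k)$; however, on ``bad'' leaves where some bias is far from $1$ the XOR error is already constant, so such leaves carry total probability only $O(\eps)$ yet their contribution to $\sum_i(1-|b_i(\ell)|)$ can be as large as $O(k\eps)$, which would destroy the saving. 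I expect the crux of the proof to lie in controlling this bad-leaf regime, likely by biasing the sampling distribution on $i$ toward coordinates that the transcript has already determined well, or by passing through an intermediate complexity measure (such as noisy-query or fractional-block-sensitivity complexity) in which the $k$-fold amplification of XOR's error is encoded structurally rather than being recovered leaf by leaf.
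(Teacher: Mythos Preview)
Your setup and diagnosis are on target: the posterior at each leaf is product, the XOR advantage factors as $\prod_i |b_i(\ell)|$, and the sole obstruction to per-coordinate error $O(\eps/k)$ is exactly the bad-leaf regime you isolate. But neither of your speculated fixes is what resolves it. Biasing the choice of $i$ toward coordinates ``the transcript has already determined well'' is not available, because you must commit to the embedding coordinate \emph{before} simulating $\mathcal{T}$ (it determines which queries are answered from $x$ versus from the planted $z$). And the paper does not route through $\noisyR$ or $\fbs$; the argument stays entirely within query complexity.

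The missing idea is to let the single-copy algorithm \emph{abort}. The paper works throughout in the model $\ddt^\mu_{\delta,\eps}$ (abort with probability $\le\delta$; conditioned on not aborting, err with probability $\le\eps$), using Facts~\ref{fact:abort-average} and~\ref{fact:yao-abort} to pass between this and $\rdta$. In the reduction of Lemma~\ref{lemma:main}, the algorithm for $g$ simulates $\cA$ on $\zgetsix$ and aborts whenever it lands on a leaf that is not good for $i^*$, or whenever more than $O(q/k)$ bits of the $i^*$-th block have been queried. Your own calculation shows bad leaves carry total mass $O(\eps)$, so this is absorbed into a \emph{constant} abort probability rather than into the error; conditioned on a good leaf, $\prod_i a_{i,\ell}\ge 1-O(\eps)$ forces $a_{i,\ell}\ge 1-O(\eps/k)$ for all but a $0.01$ fraction of coordinates, and two bounded-Markov steps (Claim~\ref{claim:goodi}) yield at least $2k/3$ coordinates $i$ for which a random non-abort leaf is good for $i$ with probability $\ge 1-\tfrac{3}{50}$. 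Intersecting with the $\ge 2k/3$ coordinates of expected query count $\le 3q/k$ gives a \emph{fixed} $i^*$, and a further Markov averaging fixes a single $z$. The factor-$k$ saving you were looking for thus comes not from a clever weighting over $i$, but from pushing the bad-leaf mass into the abort budget.
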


This answers Conjecture 1 of Blais and Brody~\cite{BlaisB19} in the
affirmative.

We prove Theorem~\ref{thm:sds-xor} by proving an analogous result in
distributional query complexity.  We also allow our algorithms to
\emph{abort} with constant probability.  Let
$\ddt_{\delta,\eps}^\mu(f)$ denote the minimal query cost of a
deterministic query algorithm that aborts with probability at most
$\delta$ and errs with probability at most $\eps$, where the
probability is taken over inputs $X \sim \mu$.  Similarly, let
$\rdt_{\delta,\eps}(f)$ denote the minimal query cost of a randomized
algorithm that computes $f$ with abort probability at most $\delta$
and error probability at most $\eps$ (here probabilities are taken
over the internal randomness of the algorithm).

Our main technical result is the following strong direct sum result
for $\xorg$ for distributional algorithms.
\begin{lemma}[Main Technical Lemma, informally stated.]\label{lemma:main}
  For every function $g:\bitsn \rightarrow \bit$, every distribution
  $\mu$, and every small enough $\delta,\eps>0$, we
  have
  $$\ddt_{\delta,\eps}^{\mu^k}(\xorg) =
  \Omega(k\ddt_{\delta',\eps'}^\mu(g))\ ,$$ for $\delta' = \Theta(1)$
  and $\eps' = \Theta(\eps/k)$.
\end{lemma}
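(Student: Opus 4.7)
The plan is the standard direct-sum reduction: given $\cA$ computing $\xorg$ on $\mu^k$ with worst-case cost $T$, error $\eps$, and abort $\delta$, I will construct a distributional algorithm $\mathcal{B}$ computing $g$ under $\mu$ with worst-case cost $O(T/k)$, error $O(\eps/k)$, and abort $O(1)$. The twist that produces the $\eps/k$ (rather than $\eps$) error is that $\mathcal{B}$ does \emph{not} use the naive ``XOR-inversion'' output $\cA(y)\oplus\bigoplus_{j\neq i^*} g(y_j)$---that rule inherits $\cA$'s error $\eps$---but instead outputs the Bayes-optimal guess for $g(y_{i^*})$ given the queries made to coordinate $i^*$ on the root-to-leaf path.

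Concretely, on input $x\sim\mu$, $\mathcal{B}$ picks a coordinate $i^*\in[k]$ uniformly, independently samples $y_j\sim\mu$ for $j\neq i^*$, embeds $x$ at position $i^*$, and simulates $\cA$; queries from $\cA$ to position $i^*$ are forwarded to $x$, and queries to other positions are answered with the samples $\mathcal{B}$ already holds. Let $\ell$ be the leaf reached. Because $\mathcal{B}$ knows $\mu$ and its full transcript of queries to $x$, it can (with no further queries) compute the conditional distribution of $g(y_{i^*})$ under $\mu$ given those query answers, and it outputs the more probable bit. $\mathcal{B}$ aborts if $\cA$ aborts, if the leaf's conditional $\xorg$-error $\eps_\ell$ exceeds $1/4$, or if $\mathcal{B}$ has already made more than $2T/k$ queries to $x$.

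The core of the analysis is a leaf-wise Fourier/independence identity. Coordinates of $\mu^k$ are independent, and they remain conditionally independent after conditioning on $\ell$, so setting $\alpha_{\ell,i}:=\E[(-1)^{g(y_i)}\mid\ell]\in[-1,1]$ the XOR-of-independent-bits identity gives $\Pr[\xorg(y)=b_\ell\mid\ell] = \tfrac{1+(-1)^{b_\ell}\prod_i \alpha_{\ell,i}}{2}$, hence $\prod_i|\alpha_{\ell,i}|=1-2\eps_\ell$, while the Bayes-optimal predictor used by $\mathcal{B}$ has error exactly $(1-|\alpha_{\ell,i^*}|)/2$. Applying the elementary inequality $\log t\le t-1$ to each $|\alpha_{\ell,i}|$ gives $\sum_i(1-|\alpha_{\ell,i}|)\le -\log(1-2\eps_\ell)\le 4\eps_\ell$ whenever $\eps_\ell\le 1/4$, so averaging over a uniform $i^*$ yields conditional error at most $2\eps_\ell/k$ at each such leaf. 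Averaging over leaves converts $\E_\ell[\eps_\ell]\le\eps$ into an overall (non-abort) error at most $2\eps/k$ for $\mathcal{B}$.

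The remaining abort accounting is two applications of Markov: $\Pr_\ell[\eps_\ell>1/4]\le 4\eps$ from $\E_\ell[\eps_\ell]\le\eps$, and $\Pr[\text{queries to }x>2T/k]\le 1/2$ from $\E[\text{queries to }x]=T/k$ (itself $\E_{i^*}\bigl[\tfrac{1}{k}\sum_i Q_i(y)\bigr]\le T/k$). Adding $\delta$ for $\cA$'s abort, total abort is at most $\tfrac12+\delta+4\eps$, a constant bounded away from $1$ for small enough $\delta,\eps$. I expect the main obstacle to be the Fourier step: one must recognize that the naive XOR-inversion is provably stuck at error $\eps$, and that switching to the posterior-argmax predictor---which costs no additional queries because $\mathcal{B}$ knows $\mu$ and its transcript---is precisely what converts the multiplicative constraint $\prod_i|\alpha_{\ell,i}|=1-2\eps_\ell$ into a per-coordinate error of size $O(\eps_\ell/k)$ through the logarithm inequality. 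The rest is packaging the constants so that $\delta'=\Theta(1)$ and $\eps'=\Theta(\eps/k)$.
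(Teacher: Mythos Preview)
Your approach is essentially the paper's: embed $x$ at a coordinate $i^*$, simulate $\cA$, output the leaf's Bayes-optimal guess for $g(y_{i^*})$, abort on too many queries or a high-error leaf, and use the product identity $\prod_i|\alpha_{\ell,i}|=1-2\eps_\ell$ to extract per-coordinate error $O(\eps_\ell/k)$. The paper packages the last step via thresholds (``good leaf'', ``good for $i$'', ``good coordinate'') and a counting argument, whereas you take logs and average directly; your route is arguably cleaner and yields the same bound.

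One point you must make explicit: the lemma is about $\ddt^\mu$, i.e.\ \emph{deterministic} distributional complexity, but your $\mathcal{B}$ is randomized over $i^*$ and the planted $y_{-i^*}$. The paper handles this by first fixing $i^*$ (showing at least $2k/3$ coordinates are simultaneously ``good'' and low-query) and then fixing $z$ by Markov on four quantities at once. You still need this averaging step, and with your constants it is tight: average abort is already $\tfrac12+\delta+4\eps$, so a further Markov to pin down $(i^*,y_{-i^*})$ while keeping abort below $1$ and error at $O(\eps/k)$ barely has room. Loosening the query-abort threshold from $2T/k$ to, say, $10T/k$ (or, as the paper does, first fixing $i^*$ among the $2k/3$ low-query coordinates so the expected query count is $3T/k$ and then aborting at a much larger multiple) gives the slack needed for the derandomization.
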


In~\cite{BlaisB19}, Blais and Brody also gave a total function
$g:\bitsn\rightarrow \bit$ whose average $\eps$ error query complexity
satisfies $\rdta_\eps(g) = \Omega(\rdt(g)\cdot \log \frac{1}{\eps})$.
We use our strong XOR Lemma together with this function show the following.

\begin{corollary}\label{cor:application}
  There exists a total function $g:\bitsn \rightarrow \bit$ such that
  $\rdt_\eps(\xorg) = \Omega(k\log(k)\cdot \rdt_\eps(g))$.
\end{corollary}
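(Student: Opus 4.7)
The plan is to instantiate the corollary with the separating function already constructed by Blais and Brody~\cite{BlaisB19}: they exhibit a total $g$ satisfying $\rdta_\eps(g) = \Omega(\rdt(g) \cdot \log(1/\eps))$ for all sufficiently small $\eps$. The whole argument is a short chain of three inequalities, combining Theorem~\ref{thm:sds-xor}, this Blais--Brody lower bound on $g$, and a trivial worst-case vs.\ average-case comparison.

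First, I would apply Theorem~\ref{thm:sds-xor} with error $\eps$ (a fixed small constant) to obtain $\rdta_\eps(\xorg) = \Omega(k \cdot \rdta_{\eps/k}(g))$. Next, substituting the Blais--Brody bound, now with error parameter $\eps/k$, yields $\rdta_{\eps/k}(g) = \Omega(\rdt(g) \cdot \log(k/\eps))$, which is $\Omega(\rdt(g) \cdot \log k)$ for any constant $\eps$ and $k$ large enough. Chaining gives $\rdta_\eps(\xorg) = \Omega(k \log k \cdot \rdt(g))$.

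To convert to the stated form, I would pass to worst-case on the left via the trivial $\rdt_\eps \geq \rdta_\eps$, and on the right replace $\rdt(g)$ by $\rdt_\eps(g)$ using the fact that $\rdt_\eps(g) = \Theta(\rdt(g))$ for any constant $\eps$ (by standard success amplification). Together these give $\rdt_\eps(\xorg) = \Omega(k \log k \cdot \rdt_\eps(g))$, as required.

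There is essentially no conceptual obstacle beyond Theorem~\ref{thm:sds-xor} itself; the main care needed is to ensure that the chosen $\eps$ lies in the intersection of the ``small enough'' regimes demanded by Theorem~\ref{thm:sds-xor} and by the Blais--Brody lower bound, and that $k$ is large enough for the $\log k$ term to dominate the constant $\log(1/\eps)$ contribution inside $\log(k/\eps)$.
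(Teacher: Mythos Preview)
Your proposal is correct and follows essentially the same approach as the paper: take the Blais--Brody separating function, apply Theorem~\ref{thm:sds-xor} to get $\rdta_\eps(\xorg) = \Omega(k\cdot \rdta_{\eps/k}(g))$, plug in the Blais--Brody bound $\rdta_{\eps/k}(g) = \Omega(\rdt(g)\log(k/\eps))$, and pass to worst-case via $\rdt_\eps \geq \rdta_\eps$. The paper simply fixes $\eps = 1/3$ from the outset and writes the chain in one line, but the argument is identical.
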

\begin{proof}
  Let $g: \bitsn \rightarrow \bit$ be a function guaranteed
  by~\cite{BlaisB19}.  Then, we have
  $$\rdt(\xorg) \geq \rdta(\xorg) \geq \Omega(k\cdot \rdta_{1/3k}(g)) \geq
  \Omega(k\cdot \rdt(g)\cdot \log(3k)) = \Omega(k\log(k)\cdot \rdt(g))\ ,$$
  where the second inequality is by Theorem~\ref{thm:sds-xor} and the
  third inequality is from the query complexity guarantee of $g$.
\end{proof}

This answers Open Question 1 from recent work of Ben-David et
al.~\cite{BenDavidGKW20}.

\subsection{Previous and Related Work}

Jain et al.~\cite{JainKS10} gave direct sum theorems for deterministic
and randomized query complexity.  While their direct sum result holds
for worst-case randomized query complexity, they incur an
\emph{increase} in error
($\rdt_\eps(f^k) \geq \delta \cdot k \cdot \rdt_{\eps + \delta}(f)$)
when computing a single copy of $f$.  Shaltiel~\cite{Shaltiel03} gave
a counterexample function for which direct sum fails to hold for
distributional complexity.  Drucker~\cite{Drucker12} gave a strong
\emph{direct product} theorem for randomized query complexity.

Our work is most closely related to that of Blais and
Brody~\cite{BlaisB19}, who give a strong direct sum theorem for
$\rdta_\eps(f^k) = \Omega(k\rdta_{\eps/k}(f))$, and explicitly
conjecture that $\xor$ admits a strong direct product theorem.
Both~\cite{BlaisB19} and ours use techniques similar to work of
Molinaro et al.~\cite{MolinaroWY13, MolinaroWY15} who give strong
direct sum theorems for communication complexity.

Our strong direct sum for $\xor$ is an example of a \emph{composition
  theorem}---lower bound on the query complexity of functions of the
form $f\circ g$.  Several very recent works studied composition
theorems in query complexity.  Bassilakis et
al.~\cite{BassilakisDGHMT20} show that
$\rdt(f\circ g) = \Omega(\fbs(f)\rdt(g))$, where $\fbs(f)$ is the
\emph{fractional block sensitivity} of $f$.  Ben-David and
Blais~\cite{BenDavidB20a, BenDavidB20b} give a tight lower bound on
$\rdt(f\circ g)$ as a product of $\rdt(g)$ and a new measure they
define called \emph{$\noisyR(f)$}, which measures the complexity of
computing $f$ on noisy inputs.  They also characterize $\noisyR(f)$ in
terms of the gap-majority function.  Ben-David et
al~\cite{BenDavidGKW20} explicitly consider strong direct sum theorems
for composed functions in randomized query complexity, asking whether
the naive success amplification algorithm is necessary to compute
$f\circ g$.  They give a partial strong direct sum theorem, showing
that there exists a partial function $g$ such that computing $\xorg$
requires success amplification, even in a model where the abort
probability may be arbitrarily close to 1.\footnote{In this query
  complexity model, called $\PostBPP$, the query algorithm is allowed
  to abort with any probability strictly less than 1.  When it doesn't
  abort, it must output $f$ with probability at least $1-\eps$.}
Ben-David et al. explicitly ask whether there exists a total function
$g$ such that $\rdt(\xorg) = \Omega(k\log(k)\rdt(g))$.

\subsection{Our Technique.}  Our technique most closely follows
the strong direct sum theorem of Blais and Brody.  We start with a
query algorithm that computes $\xorg$ and use it to build a query
algorithm for computing $g$ with low error.  To do this, we'll take an
input for $g$ and \emph{embed} it into an input for $\xorg$.  Given
$x \in \bitsn$, $i \in [k]$, and $y \in \bit^{n\times k}$, let
$\ygetix \deq (y^{(1)}, \ldots, y^{(i-1)}, x, y^{(i+1)}, \ldots
y^{(k)})$ denote the input obtained from $y$ by replacing the $i$-th
coordinate $y^{(i)}$ with $x$.  Note that if $x \sim \mu$ and
$y \sim \mu^k$,\footnote{We use $\mu^k$ to denote the distribution on
  $k$-tuples where each coordinate is independently distributed
  $\sim \mu$.} then $\ygetix\sim \mu^k$ for all $i \in [k]$.

We require the following observation of Drucker~\cite{Drucker12}.
\begin{lemma}[\cite{Drucker12}, Lemma 3.2]
  Let $y\sim \mu^k$ be an input for a query algorithm $\cA$, and
  consider any execution of queries by $\cA$.  The distribution of
  coordinates of $y$, conditioned on the queries made by $\cA$,
  remains a product distribution.
\end{lemma}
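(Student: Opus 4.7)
The plan is to prove the statement by a straightforward induction on the number of queries made by $\cA$. As a preliminary reduction, if $\cA$ is randomized, I would first condition on the algorithm's internal random seed $r$; since $r$ is independent of the input, the conditional distribution of $y$ given $r$ is still $\mu^k$, and once $r$ is fixed the algorithm is deterministic. So it suffices to prove the claim for a deterministic algorithm $\cA$.

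I would then set up the induction as follows. A transcript after $t$ queries is a sequence $\tau_t = ((i_1,j_1,b_1),\ldots,(i_t,j_t,b_t))$, where $(i_s,j_s)\in[k]\times[n]$ records the block and bit position queried at step $s$ and $b_s\in\bit$ is the answer. Let $\nu^{(\tau_t)}$ denote the conditional distribution of $y$ given transcript $\tau_t$. The inductive hypothesis is that $\nu^{(\tau_t)} = \nu^{(\tau_t)}_1\times\cdots\times\nu^{(\tau_t)}_k$ as a distribution on $(\bitsn)^k$. The base case $t=0$ is immediate, since $\nu^{(\varnothing)}=\mu^k$ is product by assumption.

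For the inductive step, because $\cA$ is deterministic, the next query location $(i_{t+1},j_{t+1})$ is a function of $\tau_t$ alone, so conditioning on $\tau_{t+1}$ amounts to further conditioning $\nu^{(\tau_t)}$ on the event $E \deq \{y^{(i_{t+1})}_{j_{t+1}} = b_{t+1}\}$. Under the product distribution $\nu^{(\tau_t)}$, the event $E$ depends only on the block $y^{(i_{t+1})}$, hence is independent of $y^{(i')}$ for every $i' \ne i_{t+1}$. Conditioning on $E$ therefore leaves every marginal $\nu^{(\tau_t)}_{i'}$ unchanged for $i' \ne i_{t+1}$, and replaces $\nu^{(\tau_t)}_{i_{t+1}}$ by itself conditioned on $E$. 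The resulting distribution is still a product, completing the induction.

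There is no real obstacle here; the lemma is essentially bookkeeping capturing the intuition that each query reveals a single bit of a single block and so cannot correlate different blocks. The only places that require care are (i) the reduction from randomized to deterministic via conditioning on the random seed, and (ii) noting explicitly that for deterministic $\cA$ the query locations are functions of the transcript so that the conditional distribution factors as described.
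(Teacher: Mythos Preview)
Your induction argument is correct and is the standard way to prove this folklore fact. Note, however, that the paper does not actually supply its own proof of this lemma; it is quoted from Drucker~\cite{Drucker12} (Lemma~3.2 there) and used as a black box. So there is no ``paper's proof'' to compare against beyond observing that your argument is exactly the kind of elementary justification one would give: fix the randomness to reduce to a deterministic tree, then induct on the depth, using that each new query location is determined by the transcript so far and hence conditioning on the answer only touches a single block.
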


In particular, the answers to $g(y^{(i)})$ remain independent bits
conditioned on any set of queries made by the query algorithm.
Our first observation is that in order to compute $\xorg(y)$ with high
probability, we must be able to compute $g(y^{(i)})$ with very high
probability for many $i$'s.  The intuition behind this observation is
captured by the following simple fact about the $\xor$ of independent
random bits.

Define the \emph{bias} of a random bit $X \in \bit$ as $r(X) \deq
\max_{b\in\bit}\Pr[X=b]$.  Define the \emph{advantage} of $X$ as $\adv(X)
\deq 2r(X)-1$.  Note that when $\adv(X) = \delta$, then $r(X) = \frac{1}{2}(1+\delta)$.

\begin{fact}\label{fact:xor-adv}
  Let $X_1,\ldots, X_k$ bit independent random bits, and let $a_i$ be
  the advantage of $X_i$.  Then, $$\adv(X_1\oplus \cdots \oplus X_k) =
  \prod_{i=1}^k \adv(X_i)\ .$$
\end{fact}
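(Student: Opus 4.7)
The plan is to reduce the advantage to a Fourier character expectation and then exploit independence to factor. The key identity is that for a random bit $X \in \bit$ with $\Pr[X=0] = p$, one has
\[
\E\bigl[(-1)^X\bigr] = p - (1-p) = 2p-1,
\]
whose absolute value equals $2\max(p,1-p) - 1 = \adv(X)$. So the first step I would carry out is to record the clean identity $\adv(X) = |\E[(-1)^X]|$ for any single bit $X$.

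Next, I would write $(-1)^{X_1 \oplus \cdots \oplus X_k} = \prod_{i=1}^k (-1)^{X_i}$, which is immediate since $(-1)$ raised to a sum mod $2$ equals the product of the terms. Applying the advantage--character identity to $X_1 \oplus \cdots \oplus X_k$ and then using independence of the $X_i$ to factor the expectation gives
\[
\adv(X_1 \oplus \cdots \oplus X_k)
= \Bigl|\E\bigl[\textstyle\prod_{i} (-1)^{X_i}\bigr]\Bigr|
= \prod_{i=1}^k \bigl|\E[(-1)^{X_i}]\bigr|
= \prod_{i=1}^k \adv(X_i),
\]
which is the claim.

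Since the entire argument is a two-line Fourier computation, there is no real obstacle; the only thing to be careful about is invoking independence at the right place (to turn the expectation of the product into the product of expectations) and tracking absolute values so that the final product of nonnegative advantages is returned. As a sanity check I would also verify the $k=2$ case by direct expansion: $\Pr[X_1 \oplus X_2 = 0] = p_1 p_2 + (1-p_1)(1-p_2)$, so $2\Pr[X_1 \oplus X_2 = 0] - 1 = (2p_1-1)(2p_2-1)$, matching the formula, and then note that the general case follows by induction on $k$ if one prefers to avoid the character notation.
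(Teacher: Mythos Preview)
Your proof is correct. The Fourier-character identity $\adv(X)=|\E[(-1)^X]|$ together with multiplicativity of $(-1)^{\cdot}$ under XOR and independence yields the result in one line, and your tracking of absolute values is exactly what is needed.

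The paper takes a slightly different, more elementary route: it argues by induction on $k$, with the base case $k=2$ handled by the direct expansion you mention as a sanity check (computing $\Pr[X_1\oplus X_2 = b_1\oplus b_2]$ and simplifying to $\tfrac12(1+\delta_1\delta_2)$), and the inductive step reducing $X_1\oplus\cdots\oplus X_k$ to the two-bit case applied to $Y=X_1\oplus\cdots\oplus X_{k-1}$ and $X_k$. Your character approach is shorter and avoids induction entirely; the paper's approach avoids introducing the $(-1)^X$ transform and stays purely in terms of probabilities and biases. Since you already note the $k=2$ expansion and the induction as an alternative, you have effectively covered both arguments.
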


For completeness, we provide a proof of Fact~\ref{fact:xor-adv} in
Appendix~\ref{sec:facts}.

Given an algorithm for $\xorg$ that has error $\eps$, it follows that
for typical leaves the advantage of computing $\xorg$ is
$\gtrsim 1-2\eps$.  Fact~\ref{fact:xor-adv} shows that for such
leaves, the advantage of computing $g(y^{(i)})$ for most coordinates
$i$ is $\gtrsim (1-2\eps)^{1/k} = 1-\Theta(\eps/k)$.  Thus,
conditioned on reaching this leaf of the query algorithm, we could
compute $g(y^{(i)})$ with very high probability.  We'd like to fix a
coordinate $i^*$ such that for most leaves, our advantage in computing
$g$ on coordinate $i^*$ is $1-O(\eps/k)$.  There are other
complications, namely that (i) our construction needs to handle aborts
gracefully and (ii) our construction must ensure that the algorithm
for $\xorg$ doesn't query the $i^*$-th coordinate too many times.  Our
construction identifies a coordinate $i^*$ and a string
$z \in \bit^{n\times k}$, and on input $x \in \bitsn$ it emulates a
query algorithm for $\xorg$ on input $\zgetsix$, and outputs our best
guess for $g(x)$ (which is now $g$ evaluated on coordinate $i^*$ of
$\zgetsix$), aborting when needed e.g., when the algorithm for
$\xorg$ aborts or when it queries too many bits of $x$.  We defer
full details of the proof to Section~\ref{sec:main}.

\subsection{Preliminaries and Notation}

Suppose that $f$ is a Boolean function on domain $\bitsn$ and that
$\mu$ is a distribution on $\bitsn$.  Let $\mu^k$ denote the
distribution obtained on $k$-tuples of $\bitsn$ obtained by sampling
each coordinate independently according to $\mu$.

An algorithm $\cA$ is a $[q,\delta,\eps,\mu]$-distributional query
algorithm for $f$ if $\cA$ is a deterministic algorithm with query
cost $q$ that computes $f$ with error probability at most $\eps$ and
abort probability at most $\delta$ when the input $x$ is drawn from
$\mu$.  We write $\cA(x) = \bot$ to denote that $\cA$ aborts on input
$x$.

Our main theorem is a direct sum result for $\xorg$ for average case
randomized query complexity; however, Lemma~\ref{lemma:main} uses
distributional query complexity.  The following results from Blais and
Brody~\cite{BlaisB19} connect the query complexities in the
randomized, average-case randomized, and distributional query models.

\begin{fact}[\cite{BlaisB19}, Proposition 14]
\label{fact:abort-average}
For every function $f : \bitsn \to \bit$, every
$0 \le \epsilon < \frac12$ and every $0 < \delta < 1$,
\[
\delta \cdot \rdtde(f) \le \overline{\rdt}_\epsilon(f) \le \tfrac{1}{1-\delta} \cdot \rdt_{\delta,(1-\delta)\epsilon}(f).
\]
\end{fact}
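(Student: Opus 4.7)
Both inequalities amount to simple transformations between the average-cost model (without abort) and the worst-case-cost model (with abort). The plan is to handle each direction by converting an optimal algorithm in one model into a valid algorithm in the other.

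For the left inequality, I would start with an optimal randomized algorithm $\cA$ computing $f$ with average query cost $T = \rdta_\eps(f)$ and error at most $\eps$ on every input, and build $\cA'$ that simulates $\cA$ but aborts as soon as $\cA$ attempts more than $T/\delta$ queries. The worst-case cost of $\cA'$ is then at most $T/\delta$. Applying Markov's inequality to the (random) number of queries $\cA$ makes on any fixed input, whose mean is at most $T$, bounds the abort probability of $\cA'$ by $\delta$. The error probability of $\cA'$ cannot exceed that of $\cA$, since $\cA'$ produces a non-abort answer only when $\cA$'s execution completes within budget, so it is still at most $\eps$. Hence $\rdtde(f) \le T/\delta$, which rearranges into the claimed inequality.

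For the right inequality, I would instead take an optimal algorithm $\cA$ with abort probability at most $\delta$, error probability at most $(1-\delta)\eps$, and worst-case cost $T = \rdt_{\delta,(1-\delta)\eps}(f)$, and define $\cA'$ to repeatedly execute $\cA$ with fresh independent randomness until a run does not abort, outputting that run's answer. On any input $x$, if $p(x) \ge 1-\delta$ denotes the non-abort probability of $\cA$, the number of executions is geometric with mean $1/p(x) \le 1/(1-\delta)$, so the average cost of $\cA'$ is at most $T/(1-\delta)$. The error probability of $\cA'$ on $x$ equals the conditional error of $\cA$ given no abort, i.e., $\Pr[\cA \text{ errs on } x]/p(x) \le (1-\delta)\eps/(1-\delta) = \eps$. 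This yields $\rdta_\eps(f) \le T/(1-\delta)$.

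The only real subtlety lies in the right inequality: the abort event inflates the conditional error rate by a factor of $1/(1-\delta)$, which is precisely why the algorithm on the right-hand side of the claim is required to have error tolerance $(1-\delta)\eps$ rather than $\eps$. Both transformations are otherwise short and essentially parameter-free, so I do not anticipate any further obstacle.
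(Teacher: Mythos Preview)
Your argument is correct and is the standard one; it is essentially the proof given in~\cite{BlaisB19}, from which the present paper merely quotes the statement without reproving it. The only point worth making explicit is that the paper uses the \emph{unconditional} convention for the error parameter in $\rdtde$ (as witnessed by the line ``Conditioned on $\cA$ not aborting, it outputs the correct value of $\xorg$ with probability at least $1-\frac{\eps}{1-\delta}$'' in the proof of Claim~\ref{claim:goodi}); under that convention your truncation argument for the left inequality goes through exactly as you wrote it, and the factor $1/(1-\delta)$ in the right inequality arises for precisely the reason you identify.
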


\begin{fact}[\cite{BlaisB19}, Lemma 15]
\label{fact:yao-abort}
For any $\alpha,\beta>0$ such that $\alpha+\beta \leq 1$, we have
\[
\max_\mu \ddt_{\delta/\alpha, \eps/\beta}^\mu(f) \leq \rdtde(f)
    \leq \max_{\mu} \ddt_{\alpha\delta, \beta\eps}^\mu(f).
\]
\end{fact}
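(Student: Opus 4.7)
The plan is to prove the two inequalities independently: the lower bound $\max_\mu \ddt_{\delta/\alpha,\eps/\beta}^\mu(f) \leq \rdtde(f)$ reduces to an averaging argument, while the upper bound $\rdtde(f) \leq \max_\mu \ddt_{\alpha\delta,\beta\eps}^\mu(f)$ requires Yao's minimax principle applied with a carefully weighted payoff.

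For the lower bound, I fix any distribution $\mu$ and any randomized algorithm $\cA$ realizing $\rdtde(f)$. Viewing $\cA$ as a distribution over deterministic algorithms $\cB$ each of cost at most $\rdtde(f)$, set $X(\cB) \deq \Pr_{x\sim\mu}[\cB(x)=\bot]$ and $Y(\cB) \deq \Pr_{x\sim\mu}[\cB(x) \notin \{f(x),\bot\}]$. Swapping the order of expectations gives $\E_\cB[X(\cB)] \leq \delta$ and $\E_\cB[Y(\cB)] \leq \eps$, so the weighted combination $Z(\cB) \deq \tfrac{\alpha}{\delta}X(\cB) + \tfrac{\beta}{\eps}Y(\cB)$ satisfies $\E_\cB[Z(\cB)] \leq \alpha+\beta \leq 1$. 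Hence some realization $\cB$ has $Z(\cB) \leq 1$, and since both summands are nonnegative this forces $X(\cB) \leq \delta/\alpha$ and $Y(\cB) \leq \eps/\beta$ simultaneously. Such a $\cB$ witnesses $\ddt_{\delta/\alpha,\eps/\beta}^\mu(f) \leq \rdtde(f)$.

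For the upper bound, set $q \deq \max_\mu \ddt_{\alpha\delta,\beta\eps}^\mu(f)$. I would invoke the (finite) minimax theorem on the zero-sum game whose algorithm-player pure strategies are deterministic query algorithms $\cB$ of cost at most $q$, whose adversary-player pure strategies are inputs $x \in \bitsn$, and whose payoff is
\[
L(\cB,x) \deq \tfrac{1}{\delta}\mathbf{1}[\cB(x)=\bot] + \tfrac{1}{\eps}\mathbf{1}[\cB(x) \notin \{f(x),\bot\}].
\]
By hypothesis, for every $\mu$ there is a witness deterministic $\cB_\mu$ of cost $\leq q$ with $\E_{x\sim\mu}[L(\cB_\mu,x)] \leq \alpha+\beta \leq 1$. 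Minimax therefore produces a distribution over $\cB$'s, i.e.\ a randomized algorithm $\cA$ of cost $\leq q$, with $\E_{\cB\sim\cA}[L(\cB,x)] \leq 1$ for every $x$. As $L$ is a sum of nonnegative terms, this decouples into $\Pr_{\cB\sim\cA}[\cB(x)=\bot] \leq \delta$ and $\Pr_{\cB\sim\cA}[\cB(x) \notin \{f(x),\bot\}] \leq \eps$ uniformly in $x$, which yields $\rdtde(f) \leq q$.

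The only nontrivial design choice is the normalization of $L$: dividing each indicator by the target tolerance ($\delta$ or $\eps$) rather than by $\alpha\delta$ or $\beta\eps$ is exactly what turns the hypothesis $\alpha+\beta \leq 1$ into a uniform bound of $1$ on the minimax value, and hence into tight coordinate-wise guarantees on abort and error. Beyond this, the proof is routine: restricting to deterministic algorithms making at most $q$ queries produces a finite game to which von Neumann's minimax theorem applies directly.
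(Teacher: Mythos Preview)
The paper does not actually prove this fact; it is quoted verbatim from \cite{BlaisB19} (their Lemma~15) and used as a black box, so there is no in-paper proof to compare against.

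That said, your argument is correct and is the standard way to establish such a two-parameter Yao lemma. The lower bound is a clean averaging/Markov argument: linearity lets you combine the abort and error constraints into the single potential $\tfrac{\alpha}{\delta}X+\tfrac{\beta}{\eps}Y$ with expectation at most $\alpha+\beta\le 1$, and nonnegativity of each summand then decouples the bound on a good realization $\cB$. The upper bound is the genuine minimax step, and your choice of payoff $L(\cB,x)=\tfrac{1}{\delta}\mathbf{1}[\cB(x)=\bot]+\tfrac{1}{\eps}\mathbf{1}[\cB(x)\notin\{f(x),\bot\}]$ is exactly right: the hypothesis gives value at most $\alpha+\beta\le 1$ against every $\mu$, von~Neumann yields a randomized $\cA$ with $\E_{\cB\sim\cA}[L(\cB,x)]\le 1$ for every $x$, and nonnegativity again decouples into the desired per-input abort and error guarantees. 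The finiteness remark (depth-$q$ decision trees on $n$ bits with outputs in $\{0,1,\bot\}$ form a finite set) is the only technical point needed to invoke minimax, and you handle it. One tiny caveat: the argument as written implicitly assumes $\delta,\eps>0$ so that the divisions make sense; the degenerate cases are immediate but worth a word if you want the statement in full generality.
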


We'll also use the following convenient facts about probability and
expectation.  For completeness we provide proofs in
Appendix~\ref{sec:facts}.
\begin{fact}\label{fact:prob-ex}
  Let $S,T$ be random variables.  Let
  $\event = \event(S,T)$ and $\cA$ be events, and for any $s$, let
  $\mu_s$ be the distribution on $T$ conditioned on $S=s$.  Then,
  $$ \Pr_{S,T}[\event|\cA] = \E_S\left[\Pr_{T\sim \mu_S}[\event(S,T) | \cA]\right]\ .$$
\end{fact}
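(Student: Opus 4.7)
The plan is to unwind the right-hand side using the definitions of expectation and of the conditional distribution $\mu_s$, then match it against the left-hand side via the law of total probability.

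Concretely, I would first expand the outer expectation as a sum over values of $S$:
$$\E_S\Bigl[\Pr_{T\sim \mu_S}[\event(S,T) \mid \cA]\Bigr] = \sum_s \Pr[S=s]\cdot \Pr_{T\sim \mu_s}[\event(s,T)\mid \cA].$$
Next, I would use the definition of $\mu_s$ as the distribution of $T$ given $S=s$ to rewrite each inner conditional probability as $\Pr[\event \mid S=s, \cA]$, so the right-hand side simplifies to $\sum_s \Pr[S=s]\cdot \Pr[\event \mid S=s,\cA]$.

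Then I would compare this against the standard law-of-total-probability decomposition
$$\Pr[\event\mid \cA] = \sum_s \Pr[S=s \mid \cA]\cdot \Pr[\event\mid S=s,\cA].$$
The two expressions agree when $\Pr[S=s] = \Pr[S=s\mid \cA]$ for every relevant $s$, i.e., when $S$ is independent of the event $\cA$. In the intended applications of this fact, this independence is implicit: $\cA$ is typically an event concerning an algorithm's internal randomness or a portion of input that is independent of the random variable $S$.

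The main (if mild) obstacle is simply recognizing the implicit independence hypothesis that makes the stated identity hold, and a secondary technical nuisance is the handling of degenerate slices with $\Pr[\cA \mid S=s] = 0$ where the inner conditional probability is undefined. I would dispatch the latter by restricting the sum to $s$ with $\Pr[\cA \mid S = s] > 0$ and checking that both sides receive zero contribution from the remaining slices. Beyond these bookkeeping points, the proof is a routine unpacking of definitions, and I anticipate no substantive technical difficulty.
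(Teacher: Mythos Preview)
Your approach is essentially the same as the paper's: both unwind the definitions, identify the inner term $\Pr_{T\sim\mu_s}[\event(s,T)\mid\cA]$ with $\Pr[\event\mid S=s,\cA]$, and match against the law-of-total-probability expansion of $\Pr[\event\mid\cA]$. You are in fact more careful than the paper, which silently passes from $\sum_s \Pr[S=s\mid\cA]\,(\cdots)$ to $\E_S[\cdots]$ in its final step without flagging the independence of $S$ and $\cA$ that you correctly isolate.
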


\begin{fact}[Markov Inequality for Bounded Variables]\label{fact:bounded-markov}
  Let $X$ be a real-valued random variable with $0\leq X \leq 1$.
  Suppose that $E[X] \geq 1-\eps$.  Then, for any $T>1$ it holds that
  $$\Pr[X<1-T\eps] < \frac{1}{T}\ .$$
\end{fact}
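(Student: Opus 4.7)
The plan is to reduce to the standard Markov inequality by passing to the non-negative deficit variable $Y \deq 1-X$. Since $0 \leq X \leq 1$, we have $0 \leq Y \leq 1$, and the hypothesis $\E[X] \geq 1-\eps$ becomes $\E[Y] \leq \eps$. The event $\{X < 1-T\eps\}$ is exactly $\{Y > T\eps\}$, so applying Markov's inequality to $Y$ gives
$$\Pr[X < 1-T\eps] = \Pr[Y > T\eps] \leq \Pr[Y \geq T\eps] \leq \frac{\E[Y]}{T\eps} \leq \frac{1}{T},$$
which already delivers the claimed bound, up to strictness.

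To obtain the strict inequality stated in the Fact, I would argue directly via the expectation. Let $A \deq \{X < 1-T\eps\}$. Splitting $\E[X]$ over $A$ and its complement and using $X < 1-T\eps$ on $A$ together with $X \leq 1$ everywhere,
$$\E[X] = \E[X\mathbf{1}_A] + \E[X\mathbf{1}_{A^c}] \leq (1-T\eps)\Pr[A] + \Pr[A^c] = 1 - T\eps\cdot\Pr[A].$$
Combining with $\E[X] \geq 1-\eps$ yields $T\eps\cdot\Pr[A] \leq \eps$, i.e., $\Pr[A] \leq 1/T$. When $\Pr[A] > 0$, the variable $X$ is strictly below $1-T\eps$ on a set of positive probability, so $\E[X\mathbf{1}_A] < (1-T\eps)\Pr[A]$ holds strictly and propagates to $\Pr[A] < 1/T$; when $\Pr[A] = 0$ (for instance when $\eps = 0$, in which case $X = 1$ almost surely) the claim is immediate.

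There is essentially no obstacle here: the argument is a one-line application of Markov, and the only care needed is the routine verification that the bound $\leq 1/T$ can be strengthened to $< 1/T$ using the strict inequality defining the event $A$.
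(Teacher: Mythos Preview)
Your proposal is correct and takes the same approach as the paper: define $Y \deq 1-X$ and apply Markov's inequality to $Y$. In fact you are more careful than the paper, whose proof stops at $\Pr[Y > T\eps] \leq 1/T$ and does not address the strict inequality in the statement; your second paragraph closes that small gap.
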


\section{Strong XOR Lemma}\label{sec:main}

In this section, we prove our main result.

\begin{lemma}[Formal Restatement of Lemma~\ref{lemma:main}]
  For every function $g: \bitsn \rightarrow \bit$, every distribution
  $\mu$ on $\bitsn$, every $0 \leq \delta \leq \frac{1}{5}$, and every
  $0 < \eps \leq \frac{1}{800}$, we
  have
  $$\ddt_{\delta,\eps}^{\mu^k}(\xorg) = \Omega\left(k \cdot
    \ddt_{\delta',\eps'}^\mu(g)\right)\ ,$$ for
  $\delta' = 0.34+4\delta$ and $\eps' = \frac{320000\eps}{k}$.
\end{lemma}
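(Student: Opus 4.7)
The plan is to reduce from an optimal algorithm $\cA$ for $\xorg$ to one for a single copy of $g$, via the embedding $x \mapsto \zgetsix$ for a carefully chosen coordinate $i^*$ and fixed string $z\in\bit^{n\times k}$. Starting from a $[q,\delta,\eps,\mu^k]$-algorithm $\cA$ for $\xorg$, I would first invoke Drucker's Lemma to conclude that, conditioned on reaching any leaf $\ell$ of $\cA$, the bits $g(y^{(1)}),\dots,g(y^{(k)})$ remain mutually independent. Fact~\ref{fact:xor-adv} then factorizes the advantage of $\xorg(y)$ at $\ell$ as $\prod_{i=1}^{k}\adv(g(y^{(i)})\mid\ell)$.

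Since $\cA$ errs with probability at most $\eps$ and aborts with probability at most $\delta$, the conditional error on non-abort leaves is at most $\eps/(1-\delta)$, so $\E_\ell[\adv(\xorg(y)\mid\ell)]\ge 1-O(\eps)$. By Fact~\ref{fact:bounded-markov} this expectation concentrates: for all but a small constant fraction of non-abort leaves, the XOR advantage exceeds $1-O(\eps)$ with a larger absolute constant hidden in the $O(\cdot)$. On these \emph{typical} leaves, the factorization forces $\prod_i \adv(g(y^{(i)})\mid\ell)\ge 1-O(\eps)$, and passing from product to sum (via logarithms, or just $(1-x)^{1/k}\ge 1-x$ after bounding the deficit) yields an average per-coordinate advantage of at least $1-O(\eps/k)$.

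Next I would choose the distinguished coordinate $i^*$ and fix a string $z$ by averaging. The expected (over uniform $i\in[k]$) slack $\E_\ell[1-\adv(g(y^{(i)})\mid\ell)]$ is $O(\eps/k)$, while the expected number of queries $\cA$ makes to coordinate $i$ is $q/k$. A joint Markov argument over $i$, followed by averaging over the off-coordinates $y^{(j)}$ for $j\ne i^*$, picks $(i^*,z)$ so that simultaneously (i) a typicality bound for $g$ on coordinate $i^*$ holds against most leaves reached by $\cA$ on $\zgetsix$, and (ii) the expected number of queries to coordinate $i^*$ on such runs is $O(q/k)$. The reduction $\cB$, on input $x\sim\mu$, is then deterministic: it runs $\cA$ on $\zgetsix$, serving any query to coordinate $i^*$ by querying the corresponding bit of $x$, and outputs the majority value of $g(y^{(i^*)})$ at the reached leaf. $\cB$ aborts when (a) $\cA$ aborts, (b) the number of queries to coordinate $i^*$ exceeds a threshold of order $q/k$, or (c) the reached leaf is atypical for $g(y^{(i^*)})$.

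The main obstacle I anticipate is budgeting aborts tightly enough to land at $\delta' = 0.34+4\delta$ while keeping the surviving error at $O(\eps/k)$. Each of (a)--(c) must be shown to consume at most a fixed constant slice of the $0.34$ slack (roughly $\delta$ from $\cA$'s own aborts, a Markov tail from the expected query count to $i^*$, and a Fact~\ref{fact:bounded-markov} tail for atypical leaves), and the Markov parameters used to pick $i^*$ must be compatible so that a single coordinate achieves both typicality and low query count simultaneously. Once the accounting is done, Fact~\ref{fact:prob-ex} translates between expectations over leaves and probabilities over the input $x$, and the per-leaf bound $\adv(g(y^{(i^*)})\mid\ell)\ge 1-O(\eps/k)$ on non-abort typical leaves yields the claimed $\eps'=O(\eps/k)$ error, while $q' = O(q/k)$ follows from the Markov threshold in (b).
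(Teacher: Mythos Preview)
Your proposal is correct and follows essentially the same approach as the paper: embed $x$ into a fixed coordinate of a $\mu^k$-input, use Drucker's lemma plus Fact~\ref{fact:xor-adv} to factor the leaf-wise XOR advantage, threshold twice (once for ``good'' leaves of $\cA$ via Fact~\ref{fact:bounded-markov}, then for leaves ``good for $i$'' via the product bound), pick $i^*$ by the pigeonhole overlap of ``good coordinate'' and ``low expected queries to coordinate $i$'', and finally fix $z$ by a union of four Markov bounds before running the three-way abort rule you describe. The only cosmetic slip is the parenthetical ``$(1-x)^{1/k}\ge 1-x$'', which is the wrong direction for what you need; the logarithm route you mention (equivalently, the paper's $1+x\le e^x$ and $e^{-2x}\le 1-x$ chain) is the one that actually converts $\prod_i a_{i,\ell}\ge 1-O(\eps)$ into a bound on the fraction of coordinates with $a_{i,\ell}<1-O(\eps/k)$.
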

\begin{proof}
  Let $q \deq D_{\delta,\eps}^{\mu^k}(\xorg)$, and suppose that $\cA$
  is a $[q,\delta,\eps,\mu^k]$-distributional query algorithm for
  $\xorg$.  Our goal is to construct an
  $\left[O(q/k), \delta', \eps', \mu\right]$-distributional query
  algorithm $\cA'$ for $g$.  Towards that end, for each leaf $\ell$ of
  $\cA$ define
  \begin{align*}
    b_\ell &\deq \argmax_{b \in \bit} \Pr_{x \sim \mu^k}[\xorg(x) = b | \leaf(\cA, x) = \ell] \\
    r_\ell &\deq \Pr_{x \sim \mu^k}[\xorg(x) = b_\ell | \leaf(\cA,x) = \ell] \\
    a_\ell &\deq 2r_\ell - 1\ .
  \end{align*}
  Call $a_\ell$ the \emph{advantage} of $\cA$ on leaf $\ell$.

  The purpose of $\cA$ is to compute $\xorg$; however, we'll show that
  $\cA$ must additionally be able to compute $g$ reasonably
  well on many coordinates of $x$.  For any $i \in [k]$ and any leaf $\ell$, define
  \begin{align*}
    b_{i,\ell} &\deq \argmax_{b \in \bit} \Pr_{x \sim \mu^k}[b = g(x^{(i)}) | \leaf(\cA,x) = \ell] \\
    r_{i,\ell} &\deq \Pr_{x\sim \mu^k}[b_{i,\ell} = g(x^{(i)}) | \leaf(\cA,x) = \ell] \\
    a_{i,\ell} &\deq 2r_{i,\ell} - 1\ .
  \end{align*}

  If $\cA$ reaches leaf $\ell$ on input $y$, then write $\cA(y)_i \deq
  b_{i,\ell}$.  $\cA(y)_i$ represents $\cA$'s best guess for $g(y^{(i)})$.

  Next, we define some structural characteristics of leaves that we'll
  need to complete the proof.

  \begin{definition}[Good leaves, good coordinates]
    \mbox{ }

    \begin{itemize}
    \item Call a leaf $\ell$ \emph{good} if $r_\ell \geq 1-200\eps$.
    \item Call a leaf $\ell$ \emph{good for $i$} if $a_{i,\ell} \geq 1-80000\eps/k$.
    \item Call coordinate $i$ \emph{good} if
      $\Pr_{x \sim \mu^k}[\leaf(\cA,x) \text{ is good for } i| \cA(x)
      \text{ doesn't abort}] \geq 1-\frac{3}{50}$.
    \end{itemize}
  \end{definition}
  When a leaf is good for $i$, then $\cA$, conditioned on reaching
  this leaf, computes $g(x^{(i)})$ with very high probability.  When a
  coordinate $i$ is good, then with high probability $\cA$ reaches a
  leaf that is good for $i$.  To make our embedding work, we need to
  fix a good coordinate $i^*$ such that $\cA$ makes only $O(q/k)$
  queries on this coordinate.  The following claim shows that most
  coordinates are good.

  \begin{claim}\label{claim:goodi}
    $i$ is good for at least $\frac{2}{3}k$ indices $i \in [k]$.
  \end{claim}

  We defer the proof of Claim~\ref{claim:goodi} to the following subsection.
  Next, for each $i \in [k]$, let $q_i(x)$ denote the number of queries that
  $\cA$ makes to $x^{(i)}$ on input $x$.  The query cost of $\cA$
  guarantees that for each input $x$, $\sum_{1\leq i\leq k} q_i(x)
  \leq q$.  Therefore, $\sum_{i\in [k]} E_{x\sim \mu^k}[q_i(x)] \leq
  q$, and so at least $\frac{2}{3}k$ indices $i \in [k]$
  satisfy
  \begin{equation}
    \label{eqn:querygoodi}
    \E_{x\sim\mu^k}[q_i(x)] \leq \frac{3q}{k}\ .
  \end{equation}
  Thus, there exists $i^*$ which satisfies both
  Claim~\ref{claim:goodi} and inequality~\eqref{eqn:querygoodi}.  Fix
  such an $i^*$.  For inputs $y\in \bit^{n\times k}$ and $x \in
  \bitsn$, let $\ygetsix \deq (y^{(1)}, \ldots, y^{(i^*-1)}, x,
  y^{(i^*+1)}, \ldots y^{(k)})$ denote the input obtained from $y$ by
  replacing $y^{(i^*)}$ with $x$. Note that if $y\sim \mu^k$ and $x
  \sim \mu$, then $\ygetix\sim \mu^k$ for all $i \in [k]$. With this
  notation and using Fact~\ref{fact:prob-ex}, the conditions from
  inequality~\eqref{eqn:querygoodi} and Claim~\ref{claim:goodi}
  satisfied by $i^*$ can be rewritten as

  $$\E_{y\sim \mu^k}\left[\E_{x\sim\mu}\left[q_{i^*}(\ygetsix)\right]\right] \leq \frac{3q}{k}\ ,$$ and
  $$\E_{y\sim \mu^k}\left[\Pr_{x \sim \mu}\left[\leaf\left(\cA,\ygetsix\right) \text{ is bad for } i^* | \cA(\ygetsix) \text{ doesn't abort}\right]\right] \leq \frac{3}{50}\ .$$

  Since $\cA$ has at most $\delta$ abort probability, we have
  $$\E_{y\sim \mu^k}\left[\Pr_{x \sim \mu}\left[\cA(\ygetsix) = \bot\right]\right] \leq \delta\ .$$

  Finally, for any leaf $\ell$ for which $i^*$ is good, we have
  $a_{i^*,\ell} \geq 1-80000\eps/k$.  Hence
  $$\E_{y\sim \mu^k}\left[\Pr_{x \sim \mu}\left[\cA(\ygetsix)_{i^*} \neq g(x) | \leaf\left(\cA, \ygetsix\right) \text{ is good for } i^*\right]\right] \leq \frac{80000\eps}{k}\ .$$

  Therefore by Markov's Inequality, there exists $z \in \bit^{n\times k}$ such that
  \begin{align}
    \E_{x \sim \mu}\left[q_{i^*}(\zgetsix)\right] &\leq \frac{12q}{k}\ ,\label{eq:query}\\
    \Pr_{x\sim \mu}\left[\leaf(\cA, \zgetsix) \text{ is bad for } i^* | \cA(\zgetsix) \neq \bot \right] &\leq \frac{6}{25}\ ,\label{eq:bad}\\
    \Pr_{x\sim \mu}\left[\cA(\zgetsix) = \bot\right] &\leq 4\delta\ , \text{ and }\label{eq:abort}\\
    \Pr_{x\sim \mu}\left[\cA(\zgetsix)_{i^*} \neq g(x) | \leaf(\cA,\zgetsix) \text{ is good for } i^*\right] &\leq \frac{320000\eps}{k}\ .\label{eq:correct}
  \end{align}
  Fix this $z$.  Now that $i^*$ and $z$ are fixed, we are ready to describe our algorithm.

\begin{algorithm}
\caption{$\cA'_{z,i^*}(x)$}
\begin{algorithmic}[1]
\State $y \gets \zgetsix$ \State Emulate algorithm $\cA$ on input $y$.
\State Abort if $\cA$ aborts, if $\cA$ queries more than $\frac{120q}{k}$ bits of $x$,
or if $\cA$ reaches a bad leaf.
\State Otherwise, output $\cA(y)$.
\end{algorithmic}
\end{algorithm}
Note that the emulation is possible since whenever $\cA$ queries the
$j$-th bit of $y^{(i^*)}$, we can query $x_j$, and we can emulate
$\cA$ querying a bit of $y^{(i)}$ for $i\neq i^*$ directly since $z$
is fixed.  It remains to show that $\cA'$ is a $\left[\frac{120q}{k},
  0.34+4\delta, \frac{320000\eps}{k}, \mu\right]$-distributional query
algorithm for $f$.

First, note that $\cA'$ makes at most $120q/k$ queries, since it
aborts instead of making more queries.  Next, consider the abort
probability of $\cA'$.  Our algorithm aborts if $\cA$ aborts, if $\cA$
probes more than $\frac{120q}{k}$ bits, or if $\cA$ reaches a bad
leaf.  By inequality~\eqref{eq:abort}, $\cA$ aborts with probability at most
$4\delta$.  By inequality~\eqref{eq:query} and Markov's Inequality, the
probability that $\cA$ probes $120q/k$ bits is at most $1/10$.
By inequality~\eqref{eq:bad}, we have $\Pr_{x \sim \mu}[\cA \text{ reaches a bad
    leaf}] \leq 6/25$.  Hence, $\cA'$ aborts with probability at most
$4\delta + \frac{1}{10} + \frac{6}{25} = 0.34 + 4\delta$.  Finally,
note that if $\cA'$ doesn't abort, then $\cA$ reaches a leaf which is
good for $i^*$.  By inequality~\eqref{eq:correct}, $\cA'$ errs with
probability at most $320000\eps/k$ in this case.

We have constructed an algorithm $\cA'$ for $g$ that makes at most
$120q/k$ queries, and when the input $x \sim \mu$, $\cA'$ aborts with
probability at most $\delta'$ and errs with probability at most
$\eps'$.  Hence, $D_{\delta',\eps'}^\mu(g) \leq 120q/k$.  Rearranging
terms and recalling that $q = D_{\delta,\eps}^{\mu^k}(\xorg)$, we get
$$D_{\delta,\eps}^{\mu^k}(\xorg) \geq \frac{k}{120}D_{\delta',\eps'}^\mu(g)\ ,$$ completing the proof.
\end{proof}

\subsection{Proof of Claim~\ref{claim:goodi}.}
\begin{proof}[Proof of Claim~\ref{claim:goodi}]
  Let $I$ be uniform on $[k]$.  We want to show that $\Pr[I \text{ is good}] \geq 2/3$.

  Conditioned on $\cA$ not aborting, it outputs the correct value of
  $\xorg$ with probability at least $1-\frac{\eps}{1-\delta} \geq
  1-2\eps$.  We first analyze this error probability by
  conditioning on which leaf is reached.  Let $\nu$ be the
  distribution on $\leaf(\cA,x)$ when $x \sim \mu^k$, conditioned on
  $\cA$ not aborting.  Let $L\sim \nu$.  Then, we have
\begin{align*}
  1-2\eps &\leq \Pr_{x\sim \mu^k}[\cA(x) = \xorg(x) | \cA \text{ doesn't abort}] \\
  &= \sum_{\text{leaf }\ell } \Pr_{L \sim \nu}[L = \ell]\cdot \Pr[\cA(x) = \xorg(x) | L=\ell] \\
  &= \sum_\ell \Pr[L=\ell]\cdot r_\ell \\
  &= \E_L[r_L]\ .
\end{align*}

Thus, $\E[r_L] \geq 1-2\eps$.  Recalling that $\ell$ is good if
$r_\ell \geq 1-200\eps$ and using Fact~\ref{fact:bounded-markov}, $L$
is good with probability at least $0.99$.  Note also that when $\ell$
is good, then $a_\ell \geq 1-400\eps$.  Let $\beta_\ell := \Pr_I[\ell \text{ is bad for } I]$.
Using $1+x\leq e^x$ and $e^{-2x} \leq 1-x$ (which holds for all $0\leq x \leq 1/2$), we have for any good leaf $\ell$
$$1-400\eps \leq a_\ell = \prod_{i=1}^k a_{i,\ell} \\ \leq
\left(1-\frac{80000\eps}{k}\right)^{k\beta_\ell} \\ \leq
e^{-80000\eps\cdot \beta_\ell} \\
\leq 1-40000\eps\beta_\ell\ .$$ Rearranging terms, we see that
$\beta_\ell \leq 0.01$.  We've just shown that a random leaf $\ell$ is
good with high probability, and when $\ell$ is good, it is good for
many $i$.  We need to show that there are many $i$ such that most
leaves are good for $i$.  Towards that end, let $\delta_{i,\ell} \deq
1$ if $\ell$ is good for $i$; otherwise, set $\delta_{i,\ell} \deq 0$.
\begin{align*}
  \E_I\left[\Pr_{x\sim\mu^k}[\leaf(\cA,x) \text{ good for } I| \cA \text{ doesn't abort}]\right] &= \E_I\left[\sum_{\ell} \Pr[L=\ell]\cdot \delta_{I,\ell}\right] \\
  &= \sum_{\ell}\Pr[L=\ell] \E_I[\delta_{I,\ell}] \\
  &= \sum_{\ell}\Pr[L=\ell]\Pr_I[\ell \text{ good for } I] \\
  &\geq \sum_{\text{good} \ell} \Pr[L=\ell]\cdot (1-\beta_\ell) \\
  &= \Pr_{L}[L \text{ is good}]\cdot (1-\beta_\ell) \\
  &\geq 0.99(1-\beta_\ell) \\
  &> 0.98\ .
\end{align*}
Thus, $\E_I\left[\Pr_{x\sim\mu^k}[\leaf(\cA,x) \text{ good for } I|
    \cA \text{ doesn't abort}]\right] \geq 1-\frac{1}{50}$.  Recalling
that $i$ is good if \\$\Pr[\leaf(\cA,x) \text{ good for } i | \cA(x)
  \text{ doesn't abort}] \geq 1-\frac{3}{50}$ and using
Fact~\ref{fact:bounded-markov}, it follows that $\Pr_I[I \text{ is
    good}] \geq 2/3$.  This completes the proof.
\end{proof}

\subsection{Proof of Theorem~\ref{thm:sds-xor}}
\begin{proof}[Proof of Theorem~\ref{thm:sds-xor}]
  Define $\eps' \deq 640000\eps$.  Let $\mu$ be the input distribution
  for $g$ achieving
  $\max_\mu D_{\frac{1}{2}, \frac{\eps'}{k}}^\mu(g)$, and let $\mu^k$
  be the $k$-fold product distribution of $\mu$.
  By the first inequality of Fact~\ref{fact:abort-average} and the
  first inequality of Fact~\ref{fact:yao-abort}, we
  have
  $$\rdta_\eps(\xorg) \geq \frac{1}{50}\rdt_{\frac{1}{50},\eps}(\xorg) \geq \frac{1}{50}\ddt_{\frac{1}{25}, 2\eps}^{\mu^k}(\xorg)\ .$$
  Additionally, by Lemma~\ref{lemma:main} and the second inequalities of Facts~\ref{fact:abort-average} and~\ref{fact:yao-abort}, we have
  $$\ddt_{\frac{1}{25}, 2\eps}^{\mu^k}(\xorg) \geq \frac{k}{120} \ddt_{\frac{1}{2},\frac{\eps'}{k}}^\mu(g) \geq \frac{k}{120}\rdt_{\frac{2}{3},\frac{4\eps'}{k}}(g) \geq \frac{k}{360}\rdta_{\frac{12\eps'}{k}}(g) \ .$$
  Thus, we have
  $\rdta_\eps(\xorg) =
  \Omega\left(\ddt_{\frac{1}{25},2\eps}^{\mu^k}(\xorg)\right)$ and
  $\ddt_{\frac{1}{25},2\eps}^{\mu^k}(\xorg) =
  \Omega\left(k\rdta_{\frac{12\eps'}{k}}(g)\right)$.  By standard
  success amplification
  $\rdta_{\frac{12\eps'}{k}}(g) = \Theta(\rdta_{\frac{\eps}{k}}(g))$.
  Putting these together yields
  $$\rdta_\eps(\xorg) =
  \Omega\left(\ddt_{\frac{1}{25},2\eps}^{\mu^k}(\xorg)\right) =
  \Omega\left(k\rdta_{\frac{12\eps'}{k}}(g)\right) =
  \Omega\left(\rdta_{\frac{\eps}{k}}(g)\right)\ ,$$ hence
  $\rdta_\eps(\xorg) = \Omega\left(k\rdta_{\frac{\eps}{k}}(g)\right)$ completing the proof.
\end{proof}

\section*{Acknowledgments}
The authors thank Runze Wang for several helpful discussions.

\bibliographystyle{plain}
\bibliography{xor}

\appendix

\section{Proofs of Technical Lemmas}\label{sec:facts}

\begin{proof}[Proof of Fact~\ref{fact:xor-adv}]
  For each $i$, let $b_i \deq \argmax_{b\in \bit} \Pr[X_i = b]$ and
  $\delta_i \deq \adv(X_i)$.  Then $\Pr[X_i = b_i] =
  \frac{1}{2}(1+\delta_i)$.
  We prove Fact~\ref{fact:xor-adv} by induction on $k$.  When $k=1$,
  there is nothing to prove.  For $k=2$, note that

  \begin{align*}
    \Pr[X_1 \oplus X_2 = b_1\oplus b_2] &=
    \frac{1}{2}(1+\delta_1)\frac{1}{2}(1+\delta_2) +
    \frac{1}{2}(1-\delta_1)\frac{1}{2}(1-\delta_2) \\
    &= \frac{1}{4}(1+\delta_1+\delta_2 + \delta_1\delta_2) + \frac{1}{4}(1-\delta_1-\delta_2 + \delta_1\delta_2) \\
    &= \frac{1}{2}(1+\delta_1\delta_2)\ .
  \end{align*}
  Hence $X_1\oplus X_2$ has advantage $\delta_1\delta_2$ and the claim
  holds for $k=2$.
  For an induction hypothesis, suppose that the claim holds for
  $X_1\oplus \cdots \oplus X_{k-1}$.  Then, setting $Y \deq X_1 \oplus
  \cdots \oplus X_{k-1}$, by the induction hypothesis, we have $\adv(Y)
  = \prod_{i=1}^{k-1} \adv(X_i)$.  Moreover, $X_1\oplus \cdots \oplus
  X_k = Y\oplus X_k$, and $$\adv(X_1 \oplus \cdots \oplus X_k) =
  \adv(Y\oplus X_k) = \adv(Y)\adv(X_k) = \prod_{i=1}^k \adv(X_i)\ .$$
\end{proof}

\begin{proof}[Proof of Fact~\ref{fact:prob-ex}]
  We condition $\Pr_{S,T}[\event(S,T)|\cA]$ on $S$.
  \begin{align*}
    \Pr_{S,T}[\event|\cA] &= \sum_s \Pr[S=s|\cA]\Pr_T[\event(S,T) | \cA,S=s] \\
                          &= \sum_s \Pr[S=s|\cA]\Pr_{T\sim \mu_s}[\event(S,T) | \cA] \\
                          &=\E_S\left[\Pr_{T \sim \mu_S}[\event(S,T)|\cA]\right]\ .
  \end{align*}

\end{proof}

\begin{proof}[Proof of Fact~\ref{fact:bounded-markov}]
  Let $Y \deq 1-X$.  Then, $E[Y] \leq \eps$.  By Markov's Inequality
  we have $$\Pr[X<1-T\eps] = \Pr[Y > T\eps] \leq \frac{1}{T}\ .$$
\end{proof}

\end{document}